\documentclass[aps, pra,preprint,groupedaddress]{revtex4-1}

\usepackage{graphicx}
\usepackage{dcolumn}
\usepackage{bm}
\usepackage{amssymb}
\usepackage{amsmath}
\usepackage{amsthm}
\usepackage{color}
\usepackage{epstopdf}
\usepackage{enumerate}

\usepackage[twoside,
left=20mm,
right=20mm,
top=40mm,
]{geometry}

\newtheorem{theorem}{Theorem}
\newtheorem{lemma}{Lemma}

\begin{document}

\title{Entanglement as upper bounded for the nonlocality of a general two-qubit system}

\author{Zhaofeng Su}
\email{zfsu@ustc.edu.cn}

\author{Haisheng Tan}

\author{Xiangyang Li}
\affiliation{LINKE Lab, School of Computer Science and Technology, University of Science and Technology of China, 443 Huangshan Road, Shushan District, Hefei City, Anhui Province 230027, China.}

\date{\today}

\begin{abstract}
  Nonlocality and entanglement are not only the fundamental characteristics of quantum mechanics but also important resources for quantum information and computation applications. Exploiting the quantitative relationship between the two different resources is of both theoretical and practical significance. The common choice for quantifying the nonlocality of a two-qubit state is the maximal violation of the Clauser-Horne-Shimony-Holt inequality. That for entanglement is entanglement of formation, which is a function of the concurrence. In this paper, we systematically investigate the quantitative relationship between the entanglement and nonlocality of a general two-qubit system. We rederive a known upper bound on the nonlocality of a general two-qubit state, which depends on the state's entanglement. We investigate the condition that the nonlocality of two different two-qubit states can be optimally stimulated by the same nonlocality test setting and find the class of two-qubit state pairs that have this property. Finally, we obtain the necessary and sufficient condition that the upper bound can be reached.

\end{abstract}

\pacs{03.65.Ud, 03.67.Mn}
\keywords{Quantum entanglement, Quantum nonlocality, Quantitative relationship, Quantum Information}

\maketitle

\section{Introduction}
Entanglement and nonlocality are two of the most fundamental characteristics of quantum mechanics~\cite{RH09, NB14}. Besides the fundamental significance for quantum mechanics, they are also indispensable resources for quantum information which is a discipline that has blossomed in the past three decades. Entanglement and nonlocality are core inherent reasons that  quantum information processing tasks have great advantages over their classical counterparts~\cite{NL00}. In recent years, many applications of entanglement and nonlocality have been proposed, which include communication complexity~\cite{CB10}, quantum cryptography~\cite{BB92}, randomness generation~\cite{DG13}, quantum repeaters~\cite{HW98,Zhaofeng18PRA}, and device-independent quantum computation~\cite{BH05}.

Entanglement is a ``spooky'' feature of quantum mechanics, which was first recognized by Einstein, Podolsky, and Rosen (EPR) eight decades ago~\cite{AE35}. There exist quantum states of a composite quantum system which cannot be interpreted as ensembles of product states. This feature is known as entanglement. A quantum state which has this feature is said to be entangled. Otherwise, it is a separable state. Mathematically, the state $\rho$ of a bipartite quantum system $A\otimes B$ is separable if it can be decomposed into the following form:
\begin{align}
   \rho  = \sum_{k} p_{k} \rho_{k}^{A} \otimes \rho_{k}^{B}
\end{align}
where $p_{k} \ge 0$ and $\sum_{k} p_{k} = 1$, and $\rho_{k}^{A}$ and $\rho_{k}^{B}$ are density operators of the corresponding subsystems $A$ and $B$, respectively. Otherwise, the state $\rho$ is entangled.
Entanglement of a bipartite quantum state is quantified by entanglement of formation~\cite{CB96}. In the following discussion, we refer to ``entanglement of formation'' simply as ``entanglement''.

Nonlocality is another fundamental characteristic of quantum mechanics, which is based on the correlation of observations of separated objects.
Recall that locality and realism were considered two of the most basic principles in classical physics. According to local realism theory, there is a complete description, which is known as a hidden variable, between two separated objects. Once the hidden variable is specified, the observations of the two objects are independent. A correlation $P(ab|xy)$ generated from two separated observers that satisfies the principles should admit a local hidden variable model (LHVM)
\begin{align}\label{eq:BiLHVM}
   P(ab|xy) = \sum_{\lambda} q_{\lambda}P_{\lambda}(a|x)P_{\lambda}(b|y),
\end{align}
where $q_{\lambda}$ is the probability of the hidden variable being $\lambda$, and $P_{\lambda}(a|x)$ and $P_{\lambda}(b|y)$ are marginal probabilities. Any correlation that satisfies the LHVM in Eq.~(\ref{eq:BiLHVM}) is said to be local correlation. On the contrary, a correlation is said to be nonlocal if it does not admit any LHVM.
In 1964, Bell showed that the predictions of quantum theory are incompatible with those of any physical theory that satisfies the local realism theory~\cite{Bell64}. This phenomenon is known as quantum nonlocality. Namely, a quantum system can generate nonlocal correlations. Any quantum state that can generate nonlocal correlations is said to be a nonlocal state, which plays a indispensable role in many quantum information processing applications.

However, it is not straightforward to determine whether a quantum state is local or nonlocal via the definition in Eq.~(\ref{eq:BiLHVM}). A clever way of detecting nonlocality is via the violation of a type of inequalities, which are known as Bell inequalities. Bell inequalities give upper bounds on all local quantum states. Thus, the violation of any Bell inequality is a sufficient evidence for nonlocality. The most popular Bell inequality for the bipartite LHVM in Eq.~(\ref{eq:BiLHVM}) is the Clauser-Horne-Shimony-Holt (CHSH) inequality. The CHSH inequality is maximally violated by the maximally entangled state $(|00\rangle + |11\rangle)/\sqrt{2}$ with the maximal violation being $2\sqrt{2}$~\cite{JF69}.

Moreover, a common choice for quantifying nonlocality is through the amount of maximal violation of a Bell inequality~\cite{NB14}. In 1995, Horodecki \textit{et al}. developed a complete characterization for the violation of the CHSH inequality by the arbitrary state of a two-qubit system~\cite{RH95}. Bipartite nonlocal resources are also important for generating multipartite nonlocality~\cite{ZS17}.

The relationship between quantum entanglement and nonlocality is always of great research interest. Although quantum entanglement and nonlocality are different resources~\cite{NB05}, they are closely related. It is obvious that all separable states are local. Thus, being entangled is a necessary condition of being nonlocal. Any purely entangled two-qubit state violates the CHSH inequality~\cite{GP92}, while it is not true for mixed states. In 1989, Werner found a class of mixed states which are entangled while admiting LHVMs~\cite{RF89}. Werner's breakthrough convinces us that the existence of entanglement is not a sufficient condition for the existence of nonlocality. A large number of studies have been reported for researching on the qualitative relationship between quantum entanglement and nonlocality since then. On the contrary, the quantitative relationship between quantum entanglement and nonlocality is not fully understood yet.

In the past two decades, partial results have been reported on the quantitative relationship.
In 2002, Verstraete and Wolf pointed out that the nonlocality of a general two-qubit state is upper bounded by its entanglement~\cite{FV02}. They further found a class of two-level density matrices that can reach the upper bound, which would turn out to be a complete description for reaching the upper bound in our current paper.
In 2011, Batle and Casas explored the quantitative relationship for the general two-qubit pure states expressed in the Bell basis~\cite{BC11}.
Bartkiewicz \textit{et al}. investigated the two-qubit states that have extremal entanglement with respect to different entanglement measures for a given CHSH violation~\cite{KB13}. In the case of concurrence, Bartkiewicz \textit{et al}. found that the low bound of entanglement for a given CHSH violation can be achieved by the pure states and Bell diagonal states. In other words, the pure states and Bell diagonal states can reach the the upper bound of the CHSH violation for a given concurrence.

In the known literature, only a few examples are mentioned to reach the upper bound of Verstraete and Wolf. It is an open problem to find the complete set of states for reaching the upper bound  when the general two-qubit states are considered.

In this paper, we exploit the necessary and sufficient condition for equality in the quantitative relationship between the entanglement and nonlocality of a general two-qubit system. We find that the example in Verstraete and Wolf's paper~\cite{FV02} turns out to be the complete set of states for reaching the upper bound.  Our paper is organized as follows. In Sec.~\ref{sec:MEntanglementandNL}, we give a brief introduction to the measure of quantum entanglement and nonlocality, respectively. As the prerequisites for solving our main question, we exploit the condition that two different general two-qubit states have the same optimal nonlocality test setting in Sec.~\ref{sec:conditionSameOptimalCHSHOperator} and figure out the class of two-qubit pure states that have the same optimal nonlocality test setting in Sec.~\ref{sec:2qubitSameOptimalCHSHOperator}. In Sec.~\ref{sec:NLupperbound}, we rederive Verstraete and Wolf's upper bound on the nonlocality of a general two-qubit state, which depends on the state's entanglement. Finally, we prove the necessary and sufficient condition for reaching the upper bound .

\section{The measure of entanglement and nonlocality}\label{sec:MEntanglementandNL}
As quantum entanglement and nonlocality are indispensable resources for quantum information processing tasks, quantifying the entanglement and nonlocality of a quantum system is of both theoretical and practical significance. In this section, we give detailed descriptions of the quantitative measure of quantum entanglement and nonlocality of arbitrary two-qubit states, respectively. We also find out the correct quantum measurement settings for stimulating the most nonlocality from the quantum state.

\subsection{Entanglement of two-qubit state}
The entanglement $E(\psi)$ of a bipartite pure state $|\psi\rangle$ is the asymptotic number of standard singlets required to locally prepare the state. It also equals the von Neumann entropy of either of the two subsystems~\cite{CH96}. That is,
\begin{align}
   E(\psi) = -{\rm tr}(\rho_{A}\log_{2} \rho_{A}) = -{\rm tr}(\rho_{B}\log_{2} \rho_{B}),
\end{align}
where $\rho_{A} = {\rm tr}_{B}(|\psi\rangle \langle \psi|)$ and $\rho_{B} = {\rm tr}_{A} (|\psi\rangle \langle \psi|)$. The entanglement of a mixed state $\rho$ is defined as the least expected entanglement of any ensemble of pure states of $\rho$. That is,
\begin{align}
   E(\rho) = min \sum_{i}p_{i} E(\psi_{i}),
\end{align}
where the minimization is over all possible pure decompositions of $\rho$ such that $\rho = \sum_{i}p_{i}|\psi_{i}\rangle \langle \psi_{i}|$.

Hill and Wootters showed that the entanglement of a two-qubit pure state is closely related to the concurrence of the state~\cite{SW97}. The concurrence of a two-qubit pure state $|\psi\rangle$ is defined as
\begin{align}\label{eq:concurrence2qubitpure}
   C(\psi) = |\langle \psi| \tilde{\psi}\rangle|,
\end{align}
where $| \tilde{\psi}\rangle = (Y \otimes Y)| \psi^{*}\rangle$. Then, the entanglement of the pure state $|\psi\rangle$ can be written as
\begin{align}\label{eq:eof2qubit}
   E(\psi) = h(\frac{1+ \sqrt{1 - C(\psi)^{2}}}{2}),
\end{align}
where $h(x)$ is known as Shannon's entropy function. The function is defined as $h(x) \equiv -x \log_{2} x - (1-x)\log_{2} (1-x)$ with $x\in [0,1]$. To keep the completeness of the definition, it denotes that $\log_{2} 0 = 0$.

Further, Wootters showed that the formula for the entanglement of a two-qubit pure state in Eq.~(\ref{eq:eof2qubit}) also holds true for mixed states~\cite{WW98}. He derived an analytical expression for the concurrence of a two-qubit state $\rho$ as follows:
\begin{align}
   C(\rho) = max\{0, \lambda_{1} - \lambda_{2} - \lambda_{3} - \lambda_{4} \},
\end{align}
where $\lambda_{1} \ge \lambda_{2} \ge \lambda_{3} \ge \lambda_{4}$ are the eigenvalues of the Hermitian matrix $\sqrt{\sqrt{\rho}\tilde{\rho}\sqrt{\rho}}$. Here, the operator $\tilde{\rho} \equiv (Y\otimes Y)\rho^{*}(Y\otimes Y)$ is the spin flip of $\rho$. Namely, the relationship of concurrence and entanglement is
\begin{align}
   E(\rho) = h(\frac{1+ \sqrt{1 - C(\rho)^{2}}}{2})
\end{align}
for a general two-qubit state.

The key process of evidencing this result is that Wootters showed the existence of a special decomposition of a general two-qubit state. We conclude the fact in Lemma \ref{lemma:decomposition4concurrence}.
\begin{lemma}\label{lemma:decomposition4concurrence}
   Suppose $\rho$ is an arbitrary density operator of a two-qubit system. There exits a pure decomposition $\rho = \sum_{k} p_{k} |\psi_{k}\rangle\langle \psi_{k}|$ with $p_{k} \ge 0$ and $\sum_{k}p_{k} = 1$ such that the concurrence $C(\psi_{k}) = C(\rho)$ for each state $|\psi_{k}\rangle$.
\end{lemma}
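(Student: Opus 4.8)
The plan is to realize every pure-state decomposition of $\rho$ through the action of a unitary on a fixed reference decomposition, and to track the concurrence through the complex symmetric bilinear form induced by the spin flip. Concretely, I would first write $\rho=\sum_{\mu}|u_{\mu}\rangle\langle u_{\mu}|$ in terms of its subnormalized eigenvectors and recall that every other subnormalized decomposition $\rho=\sum_{k}|w_{k}\rangle\langle w_{k}|$ is obtained as $|w_{k}\rangle=\sum_{\mu}U_{k\mu}|u_{\mu}\rangle$ for a matrix $U$ with orthonormal columns. Forming the matrix $\tau_{ij}=\langle w_{i}|\tilde{w}_{j}\rangle$, the antisymmetry $Y^{T}=-Y$ makes $Y\otimes Y$ symmetric, so $\tau$ is complex symmetric and transforms as $\tau\mapsto U\tau U^{T}$ under a change of decomposition; the concurrence of the $k$-th normalized state is then $|\tau_{kk}|/\langle w_{k}|w_{k}\rangle$. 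It is convenient to pass to the magic (Bell) basis, in which the spin flip acts by complex conjugation of the coefficients, so that $\tau_{ij}$ reduces to a symmetric dot product of coefficient vectors and the eigenvalues $\lambda_{1}\ge\lambda_{2}\ge\lambda_{3}\ge\lambda_{4}$ of $\sqrt{\sqrt{\rho}\tilde{\rho}\sqrt{\rho}}$ appear as the singular values of $\tau$.

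Next I would use the Takagi (Autonne) factorization of the complex symmetric matrix $\tau$ to select a reference decomposition in which $\tau=\mathrm{diag}(\lambda_{1},\lambda_{2},\lambda_{3},\lambda_{4})$ with real nonnegative entries. Multiplying the last three states by the phase $i$, i.e.\ conjugating $\tau$ by $\mathrm{diag}(1,i,i,i)$, leaves all four individual concurrences unchanged but turns the diagonal into $(\lambda_{1},-\lambda_{2},-\lambda_{3},-\lambda_{4})$, whose trace is exactly $\lambda_{1}-\lambda_{2}-\lambda_{3}-\lambda_{4}$. When this quantity is positive it equals $C(\rho)$, and the strategy is then to apply a further \emph{real orthogonal} rotation $O$: under a real $O$ both $\tau$ and the Hermitian Gram matrix of the states transform by the same congruence $X\mapsto OXO^{T}$, the trace of the now real symmetric $\tau$ is preserved, and a Schur--Horn/Jacobi-type flattening makes every diagonal entry equal to the common average. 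Choosing $O$ so that the state norms are simultaneously equalized then forces each normalized state to have concurrence $|\tau_{kk}|/\langle w_{k}|w_{k}\rangle=C(\rho)$, which is the required decomposition. For the complementary regime $\lambda_{1}\le\lambda_{2}+\lambda_{3}+\lambda_{4}$, where $C(\rho)=0$, I would instead seek states whose magic-basis coefficient vectors are isotropic, $\sum_{\mu}(w_{k})_{\mu}^{2}=0$, so that each concurrence vanishes; the inequality $\lambda_{1}\le\lambda_{2}+\lambda_{3}+\lambda_{4}$ is precisely the polygon condition guaranteeing that phases together with a rotation can cancel the diagonal of $\mathrm{diag}(\lambda_{1},\lambda_{2},\lambda_{3},\lambda_{4})$ to zero.

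The hard part will be the final reconciliation step. The congruence that governs the concurrence numerators $\tau_{kk}$ and the congruence that governs the weights $\langle w_{k}|w_{k}\rangle$ are coupled, so it is not immediate that a single transformation can flatten \emph{both} the diagonal of $\tau$ and the norms of the states, and thereby pin each \emph{individual} concurrence — not merely the average concurrence — to the value $C(\rho)$. I expect to control this by restricting attention to real orthogonal rotations (so that the two objects transform identically) combined with a continuity/intermediate-value argument that equalizes the concurrences of the extreme pair of states and iterates until all four agree, or else by exhibiting the equalizing transformation explicitly. Verifying that the common value produced is exactly $C(\rho)$, rather than some other constant, is where the trace identity $\mathrm{tr}\,\tau=\lambda_{1}-\lambda_{2}-\lambda_{3}-\lambda_{4}$ and the already-stated formula $C(\rho)=\max\{0,\lambda_{1}-\lambda_{2}-\lambda_{3}-\lambda_{4}\}$ do the essential work.
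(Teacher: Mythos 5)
The first thing to note is that the paper contains no proof of this lemma at all: it is imported verbatim from Wootters' paper \cite{WW98} (``Wootters showed the existence of a special decomposition\ldots We conclude the fact in Lemma~1''). So the only meaningful comparison is with Wootters' original argument, and your plan is essentially a reconstruction of exactly that argument: subnormalized decompositions related by isometries, the complex symmetric matrix $\tau_{ij}=\langle w_i|\tilde{w}_j\rangle$ transforming by congruence, Takagi diagonalization putting the $\lambda_k$ on the diagonal, the phase matrix $\mathrm{diag}(1,i,i,i)$ turning the trace into $\lambda_1-\lambda_2-\lambda_3-\lambda_4$, real orthogonal rotations as the residual freedom, and the polygon condition $\lambda_1\le\lambda_2+\lambda_3+\lambda_4$ in the separable case. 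All of these ingredients are correct and are the right ones.

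The one genuine defect is the \emph{primary} version of your reconciliation step. A single orthogonal congruence cannot, in general, simultaneously flatten the diagonal of $\tau$ and the diagonal of the Gram matrix $G_{ij}=\langle w_i|w_j\rangle$: already for $2\times 2$ real symmetric matrices the flattening angle is fixed by the entries (it must satisfy $(m_{11}-m_{22})\cos 2\phi+2m_{12}\sin 2\phi=0$), and $\tau$ and $G$ generically demand different angles; nothing in this setup reconciles them. Fortunately this is also unnecessary, and your fallback is Wootters' actual argument: never separate numerator from denominator, but work directly with the preconcurrences $c_k=\tau_{kk}/G_{kk}$. Their norm-weighted average $\sum_k G_{kk}c_k=\mathrm{tr}\,\tau=C(\rho)$ is invariant under real orthogonal rotations, and after the phase step the $c_k$ are real and stay real. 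If they are not all equal, positivity of the weights yields indices $i,j$ with $c_i<C(\rho)<c_j$; rotating by $\pi/2$ in the $(i,j)$ plane swaps the two states, so the intermediate value theorem gives an intermediate angle at which the slot-$i$ preconcurrence equals $C(\rho)$ \emph{exactly}. This is the refinement your sketch needs for termination: aim each rotation at the invariant average rather than ``equalizing the extreme pair'', which need not terminate in finitely many steps. The pinned state can then be set aside (the weighted average of the remaining ones is still $C(\rho)$), and at most three such rotations finish; the common value is automatically $C(\rho)$ by the trace identity, as you say. In the $C(\rho)=0$ branch the preconcurrences become complex after the phase choice, so the scalar intermediate-value argument needs a further small adaptation (handled in \cite{WW98}), but your polygon-condition diagnosis is the right one.
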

Note that the concurrence of a density operator is a quantity ranging from 0 to 1. The separable states correspond to concurrence 0 and the maximally entangled states correspond to concurrence 1. The entanglement $E(\rho)$ is monotonically increasing as $C(\rho)$ goes from 0 to 1. Thus, concurrence can act as a measure of entanglement in its own right.

\subsection{Nonlocality of two-qubit state}
The simplest scenario for generating quantum correlations consists of two separated parties, say,  Alice and Bob. A two-qubit quantum state $\rho$ is shared between them. Each of the two parties performs a dichotomic measurement which is taken from two possible choices. Suppose Alice performs the measurement $x$ and observes an outcome $a$. Those of Bob are $y$ and $b$, respectively. In this paper, we only consider projective measurements on single-qubit systems where the outcomes $a,b\in\{+1, -1\}$. Then, a quantum correlation can be generated from the scenario as follows:
\begin{align}\label{eq:BiQuantumCorrelation}
   P(ab|xy) = {\rm tr}(\rho M^{x}_{a} \otimes M^{y}_{b} ),
\end{align}
where $M^{x}_{a}$ is the measurement operator of Alice when she chooses measurement $x$ and obtains outcome $a$ and $M^{y}_{b}$ is the corresponding measurement operator of Bob. If there exist measurement settings for Alice and Bob such that the generated quantum correlation in Eq.~(\ref{eq:BiQuantumCorrelation}) does not admit any LHVM in Eq.~(\ref{eq:BiLHVM}), the correlation $P(ab|xy)$ is a nonlocal quantum correlation and the state $\rho$ is said to be a nonlocal quantum state. Otherwise, the correlation $P(ab|xy)$ is a local quantum correlation and $\rho$ is said to be a local quantum state.

The violation of the CHSH inequality is a sufficient evidence that the state $\rho$ is nonlocal. A local quantum state $\rho$ must satisfy any CHSH inequality as follows:
\begin{align}
   {\rm tr}(\rho S) \le 2,
\end{align}
where $S$ is a CHSH operator of the form
\begin{align}
   S = A \otimes (B + B') + A' \otimes (B - B').
\end{align}
Here $A$ and $A'$ are measurement observables of Alice's qubit while $B$ and $B'$ are that of Bob's qubit. If there is any CHSH operator such that the inequality is violated, the state must be nonlocal.

The nonlocality of the two-qubit state $\rho$ can be quantified by the maximal value of the CHSH expression as follows:
\begin{align}
   \mathcal{N}(\rho) = \max_{S} {\rm tr}(\rho S),
\end{align}
where the maximum is over all possible CHSH operators. In the following discussion, we refer to the nonlocality of a state as the quantity $\mathcal{N}(\cdot)$. We also refer to the CHSH operator that can achieve $\mathcal{N}(\rho)$ as the optimal CHSH operator of the state $\rho$.

It is obvious that the measure $\mathcal{N}(\cdot)$ of bipartite nonlocality is a convex function of bipartite density operators on $\mathcal{H}_{2}^{\otimes 2}$. Suppose $\rho^{AB}$ is a density operator that can be generated from the ensemble $\{q_{k}, \rho_{k}^{AB}\}$, namely, $\rho^{AB}=\sum_{k} q_{k} \rho_{k}^{AB}$. Let $S$ be the optimal CHSH operator of the state $\rho^{AB}$. Then, we are convinced of the convexity of nonlocality by the following relation:
\begin{align}\label{ineq:convexityNL}
   \mathcal{N}\big(\sum_{k} q_{k} \rho_{k}^{AB}\big) = \sum_{k} q_{k} {\rm tr}(\rho_{k}^{AB}S) \le \sum_{k}q_{k} \mathcal{N}(\rho_{k}^{AB}).
\end{align}
The equality in Eq.~(\ref{ineq:convexityNL}) holds if and only if the operator $S$ also acts as the optimal CHSH operator for all constituent states $\rho_{k}^{AB}$.

\subsection{The optimal CHSH operator}\label{sec:optimalCHSHoperator}
Horodecki \textit{et al}. got an analytical expression for the nonlocality of a general two-qubit state~\cite{RH95}. In this section, we review the process of analyzing the nonlocality and further work out the mathematical expression of the corresponding optimal CHSH operator.

Any density operator $\rho$ of a two-qubit quantum system can be represented by the combination of the identity operator and the generators of the SU(2) algebra~\cite{FJ81} as follows:
\begin{align}
   \rho = \frac{1}{4} (I \otimes I + \vec{r}\cdot \vec{\sigma} \otimes I + I \otimes \vec{s}\cdot \vec{\sigma} + \sum_{j,k=1}^{3} T_{jk} \sigma_{j} \otimes \sigma_{k}),
\end{align}
where $\vec{\sigma} = (\sigma_{1}, \sigma_{2}, \sigma_{3})$ is the vector of Pauli matrices and the coefficient $T_{jk} = {\rm tr}(\sigma_{j}\otimes \sigma_{k} \rho)$ is the element of a $3 \times 3$ real matrix $T$. The vectors $\vec{r}$ and $\vec{s}$ only determine the properties of the two subsystems, respectively. The global properties between the joint system are contained in the matrix $T$. Thus, we call $T$ the correlation matrix of the bipartite state $\rho$.

Note that any measurement observable $A$ for a qubit system can be represented by a unit column vector $\vec{a} \in \mathbb{R}^{3}$ such that $A = \vec{a}\cdot \vec{\sigma}$. Let $\vec{a}$, $\vec{a'}$, $\vec{b}$, and $\vec{b'}$ be the corresponding vectors of the measurement observables $A$, $A'$, $B$, and $B'$, respectively. Then, the expectation of the joint measurement $A\otimes B$ can be written as
\begin{align}
   {\rm tr}(A\otimes B \rho) = \sum_{j,k=1}^{3} a_{j}T_{jk}b_{k} = \langle a|T|b\rangle.
\end{align}
Here we abuse the Dirac notation $|a\rangle$, which usually stands for the pure state of a quantum system, for the unit column vector $\vec{a}$. In the following discussions, we interchange the two kinds of notations based on the convenience of expression.

Let $\vec{b} + \vec{b'} = 2 \cos{\theta}\vec{c}$ and $\vec{b} - \vec{b'} = 2 \sin{\theta}\vec{c'}$  for $\theta\in [0,\pi)$ and $\vec{c},\vec{c'}\in \mathbb{R}^{3}$. Note that $\vec{c}$ and $\vec{c'}$ are orthogonal unit vectors. Let $T^{T}T = \sum_{k=1}^{3} \lambda_{k} |\mu_{k}\rangle \langle \mu_{k}|$ be the spectral decomposition of the matrix $T^{T}T$. Without loss of generality, suppose the eigenvalues  $\lambda_{k}$ are in nonincreasing order. Then, we can get
\begin{align}
   {\rm tr}(S\rho) & = {\rm tr}(A\otimes(B+B')\rho) + {\rm tr}(A'\otimes(B - B')\rho) \nonumber \\
             & = 2 \cos{\theta} \langle a|T|c\rangle + 2 \sin{\theta} \langle a'|T|c'\rangle \nonumber \\
             & \le 2 \sqrt{(\langle a|T|c\rangle)^{2} + ( \langle a'|T|c'\rangle)^{2}} \label{ineq:first} \\
             & \le 2 \sqrt{\| T|c\rangle\|^{2} + \| T|c'\rangle\|^{2}} \label{ineq:second} \\
             & \le 2\sqrt{\lambda_{1} + \lambda_{2}}. \label{ineq:third}
\end{align}
The inequality in Eq.~(\ref{ineq:third}) is with equation when $|c\rangle = \eta|\mu_{1}\rangle$ and $|c'\rangle = \eta'|\mu_{2}\rangle$ with $\eta,\eta'\in\{+1, -1\}$ being signs. In this case, $\|T|c\rangle\|^{2} = \langle c|T^{T}T|c\rangle = \lambda_{1}$ and $\|T|c'\rangle\|^{2} = \langle c'|T^{T}T|c'\rangle = \lambda_{2}$ . The equality of Eq.~(\ref{ineq:second}) holds when $|a\rangle = \delta \frac{T|c\rangle}{\|T|c\rangle\|}$ and $|a'\rangle = \delta \frac{T|c'\rangle}{\|T|c'\rangle\|}$ with signs $\delta, \delta' \in \{+1, -1\}$. And the equality in Eq.~(\ref{ineq:first}) holds when $\cos{\theta} = \frac{\langle a|T|c\rangle}{\sqrt{\lambda_{1} + \lambda_{2}}}$ and $\sin{\theta} = \frac{\langle a'|T|c'\rangle}{\sqrt{\lambda_{1} + \lambda_{2}}}$. Thus, we get that the nonlocality of $\rho$ is $\mathcal{N}(\rho) = 2\sqrt{\lambda_{1} + \lambda_{2}}$. The nonlocality can be achieved when all three equalities in Eqs.~(\ref{ineq:first})--(\ref{ineq:third}) hold true simultaneously. The corresponding optimal CHSH operator of state $\rho$ is
\begin{align}
   S & = 2 \cos{\theta} A\otimes C + 2\sin{\theta} A' \otimes C' \nonumber\\
     & = 2 \delta\sqrt{\frac{\lambda_{1}}{\lambda_{1} + \lambda_{2}}} (\frac{\delta\eta}{\sqrt{\lambda_{1}}}T\vec{\mu_{1}}) \cdot \vec{\sigma} \otimes (\eta\vec{\mu_{1}}) \cdot \vec{\sigma} \nonumber\\
     & \quad + 2 \delta' \sqrt{\frac{\lambda_{2}}{\lambda_{1} + \lambda_{2}}} (\frac{\delta'\eta'}{\sqrt{\lambda_{2}}}T\vec{\mu_{2}}) \cdot \vec{\sigma} \otimes (\eta'\vec{\mu_{2}}) \cdot \vec{\sigma} \nonumber\\
     & = \frac{2}{\sqrt{\lambda_{1} + \lambda_{2}}} (T\vec{\mu_{1}} \cdot \vec{\sigma} \otimes \vec{\mu_{1}}\cdot \vec{\sigma} + T\vec{\mu_{2}} \cdot \vec{\sigma} \otimes \vec{\mu_{2}}\cdot \vec{\sigma}). \label{eq:optimalCHSHoperator}
\end{align}
Let $w_{jk} \equiv \frac{1}{4}{\rm tr}(S\sigma_{j} \otimes \sigma_{k})$ be the Pauli coefficients of the optimal CHSH operator $S$. Namely, $S = \sum_{j,k = 1}^{3} w_{jk} \sigma_{j} \otimes \sigma_{k}$. Thus, the optimal CHSH operator $S$ of state $\rho$ is uniquely corresponding to a $3\times 3$ real matrix $W = (w_{jk})_{3\times 3}$. Taking the standard basis, the matrix $W$ can be written as
\begin{align}
   W & = \sum_{j,k = 1}^{3} w_{jk} |j\rangle \langle k| \nonumber \\
     & = \frac{2}{\sqrt{\lambda_{1} + \lambda_{2}}} \sum_{j,k= 1}^{3} ((T|\mu_{1}\rangle)_{j} (|\mu_{1}\rangle)_{k} + (T|\mu_{2}\rangle)_{j} (|\mu_{2}\rangle)_{k}) |j\rangle \langle k| \nonumber \\
     & = \frac{2}{\sqrt{\lambda_{1} + \lambda_{2}}} T(|\mu_{1}\rangle \langle \mu_{1}| + |\mu_{2}\rangle \langle \mu_{2}|) . \label{eq:coeffMatrix4OptCHSHoperator}
\end{align}
The above analysis is based on Horodecki \textit{et al.}'s analysis for obtaining the maximum CHSH violation of a general two-qubit state~\cite{RH95}. Our contribution is to conclude the optimal CHSH operator in Eq.~(\ref{eq:optimalCHSHoperator}) to achieve Horodecki \textit{et al.}'s maximum CHSH violation and reformulate it to be expressed by a $3\times 3$ coefficient matrix under the Pauli matrices basis in Eq.~(\ref{eq:coeffMatrix4OptCHSHoperator}). As the extended results are to be used in the following discussion, we conclude them as the following lemma.
\begin{lemma}\label{lemma:OptimalCHSHOperator}
   Suppose $\rho$ is an arbitrary two-qubit state with correlation matrix $T$ and the eigenvalues in the spectral decomposition $T^{T}T = \sum_{k=1}^{3} \lambda_{k} |\mu_{k}\rangle \langle \mu_{k}|$ are in nonincreasing order. Then, the nonlocality of state $\rho$ is $\mathcal{N}(\rho) = 2\sqrt{\lambda_{1} + \lambda_{2}}$. The optimal CHSH operator of state $\rho$ is $S= \sum_{j,k = 1}^{3} w_{jk} \sigma_{j} \otimes \sigma_{k}$, which is uniquely specified by the corresponding Pauli coefficients matrix
   \begin{align}
      W  = \frac{2}{\sqrt{\lambda_{1} + \lambda_{2}}} T(|\mu_{1}\rangle \langle \mu_{1}| + |\mu_{2}\rangle \langle \mu_{2}|) .
\end{align}
\end{lemma}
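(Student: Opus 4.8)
The plan is to establish both claims of the lemma — the value $\mathcal{N}(\rho) = 2\sqrt{\lambda_{1} + \lambda_{2}}$ and the explicit Pauli coefficient matrix $W$ of the optimal CHSH operator — by directly maximizing the CHSH expression ${\rm tr}(S\rho)$ over all admissible measurement observables and then reading off the maximizer. First I would exploit the representations already set up in Sec.~\ref{sec:optimalCHSHoperator}: writing $\rho$ in the SU(2) form turns any joint expectation into ${\rm tr}(A\otimes B\,\rho) = \langle a|T|b\rangle$, where $T$ is the correlation matrix and $\vec{a},\vec{b}$ are the unit Bloch vectors of the observables. I would then parametrize Bob's two observables via $\vec{b}+\vec{b'} = 2\cos\theta\,\vec{c}$ and $\vec{b}-\vec{b'} = 2\sin\theta\,\vec{c'}$, noting that $\vec{c},\vec{c'}$ form an orthonormal pair; this recasts the CHSH expectation in the clean form $2\cos\theta\,\langle a|T|c\rangle + 2\sin\theta\,\langle a'|T|c'\rangle$.

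Next I would bound this quantity by a chain of three inequalities. In the first step, a Cauchy--Schwarz estimate in the $(\cos\theta,\sin\theta)$ variable eliminates $\theta$ and yields $2\sqrt{(\langle a|T|c\rangle)^{2} + (\langle a'|T|c'\rangle)^{2}}$. In the second step, since $\vec{a},\vec{a'}$ are unit vectors, $\langle a|T|c\rangle \le \|T|c\rangle\|$ and likewise for the primed term, giving $2\sqrt{\|T|c\rangle\|^{2} + \|T|c'\rangle\|^{2}}$, an expression depending only on the orthonormal pair $\vec{c},\vec{c'}$. In the third step I would maximize $\|T|c\rangle\|^{2} + \|T|c'\rangle\|^{2} = \langle c|T^{T}T|c\rangle + \langle c'|T^{T}T|c'\rangle$ over all such pairs, using the spectral decomposition $T^{T}T = \sum_{k}\lambda_{k}|\mu_{k}\rangle\langle\mu_{k}|$ with $\lambda_{1}\ge\lambda_{2}\ge\lambda_{3}$ to obtain the bound $\lambda_{1}+\lambda_{2}$.

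I would then verify that all three bounds saturate simultaneously: choosing $|c\rangle = \eta|\mu_{1}\rangle$, $|c'\rangle = \eta'|\mu_{2}\rangle$, aligning $|a\rangle,|a'\rangle$ with $T|c\rangle,T|c'\rangle$, and fixing $\theta$ through $\cos\theta = \langle a|T|c\rangle/\sqrt{\lambda_{1}+\lambda_{2}}$, so that $\mathcal{N}(\rho) = 2\sqrt{\lambda_{1}+\lambda_{2}}$ is actually attained. Substituting these optimal vectors back into $S = 2\cos\theta\,A\otimes C + 2\sin\theta\,A'\otimes C'$ and simplifying, the sign factors $\eta,\eta',\delta,\delta'$ cancel and leave $S = \frac{2}{\sqrt{\lambda_{1}+\lambda_{2}}}\big(T\vec{\mu_{1}}\cdot\vec{\sigma}\otimes\vec{\mu_{1}}\cdot\vec{\sigma} + T\vec{\mu_{2}}\cdot\vec{\sigma}\otimes\vec{\mu_{2}}\cdot\vec{\sigma}\big)$. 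Finally I would compute the Pauli coefficients $w_{jk} = \frac{1}{4}{\rm tr}(S\,\sigma_{j}\otimes\sigma_{k})$ and assemble them into $W = \frac{2}{\sqrt{\lambda_{1}+\lambda_{2}}}T(|\mu_{1}\rangle\langle\mu_{1}| + |\mu_{2}\rangle\langle\mu_{2}|)$, which is the asserted form.

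The main obstacle is the third inequality: the claim that the supremum of $\langle c|T^{T}T|c\rangle + \langle c'|T^{T}T|c'\rangle$ over orthonormal pairs equals $\lambda_{1}+\lambda_{2}$. This is a Ky Fan / Courant--Fischer type variational fact and should be argued rather than taken for granted. The cleanest route is to expand $\vec{c},\vec{c'}$ in the eigenbasis $\{\vec{\mu_{i}}\}$ and set $p_{i} = \langle\mu_{i}|c\rangle^{2} + \langle\mu_{i}|c'\rangle^{2}$; then $\sum_{i}p_{i} = 2$ while each $p_{i}\le 1$ by Bessel's inequality, so maximizing $\sum_{i}\lambda_{i}p_{i}$ under these constraints forces $p_{1}=p_{2}=1$, $p_{3}=0$ and hence the value $\lambda_{1}+\lambda_{2}$. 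A secondary subtlety is checking the \emph{compatibility} of the equality conditions across the three inequalities — that the vectors forced by one saturation do not conflict with those forced by another — since this is precisely what guarantees the bound is tight and that the displayed $S$ is a genuine optimal CHSH operator.
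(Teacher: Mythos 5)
Your proposal follows essentially the same route as the paper: the same chain of three inequalities after the parametrization $\vec{b}+\vec{b'}=2\cos\theta\,\vec{c}$, $\vec{b}-\vec{b'}=2\sin\theta\,\vec{c'}$, the same saturation analysis, and the same substitution to read off $S$ and its Pauli coefficient matrix $W$. Your only addition is a worthwhile one — explicitly justifying the Ky Fan--type bound $\langle c|T^{T}T|c\rangle + \langle c'|T^{T}T|c'\rangle \le \lambda_{1}+\lambda_{2}$ via Bessel's inequality, a step the paper states without argument — so the proposal is correct and, if anything, slightly more complete than the paper's own derivation.
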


\section{Two different states having the same optimal CHSH operator} \label{sec:conditionSameOptimalCHSHOperator}
In this section, we research the question whether the nonlocality of different quantum states can be optimally stimulated by the same quantum nonlocality test setting. Let $T$ and $F$ be correlation matrices of two-qubit states $\rho$ and $\varrho$, respectively.
We try to work out the condition that the optimal CHSH operators of the states $\rho$ and $\varrho$ can be the same.

Suppose the singular value decomposition of $T$ is $T=UDV$, where $U = \sum_{k=1}^{3}|\mu_{k}\rangle \langle k|$ and $V = \sum_{k=1}^{3}|k\rangle \langle \nu_{k}|$ are orthogonal matrices and $D = \sum_{k=1}^{3}t_{k}|k\rangle \langle k|$ is a diagonal matrix. Then, we have $T=\sum_{k=1}^{3}t_{k}|\mu_{k}\rangle \langle v_{k}|$ and $T^{T}T=\sum_{k=1}^{3}t_{k}^{2}|v_{k}\rangle \langle v_{k}|$. With out loss of generality, suppose $|t_{1}|\ge |t_{2}|\ge |t_{3}|$. According to Lemma~\ref{lemma:OptimalCHSHOperator}, the corresponding Pauli coefficients matrix of the optimal CHSH operator for state $\rho$ can be written as
\begin{align}
   W = \frac{2}{\sqrt{t_{1}^{2} + t_{2}^{2}}} (t_{1} |\mu_{1}\rangle \langle \nu_{1}| + t_{2} |\mu_{2}\rangle \langle \nu_{2}|).
\end{align}
Suppose $F=PEQ$, where $P = \sum_{k=1}^{3} |\alpha_{k}\rangle \langle k|$, $Q = \sum_{k=1}^{3} |k\rangle \langle \beta_{k}|$, and $E = \sum_{k=1}^{3} f_{k} |k\rangle \langle k|$. Further suppose $|f_{k}|$ are in nonincreasing order. Similarly, the corresponding Pauli coefficients matrix of the optimal CHSH operator  for state $\varrho$ can be written as
\begin{align}
   R = \frac{2}{\sqrt{f_{1}^{2} + f_{2}^{2}}} (f_{1} |\alpha_{1}\rangle \langle \beta_{1}| + f_{2} |\alpha_{2}\rangle \langle \beta_{2}|).
\end{align}

The preset condition that the optimal CHSH operators of two-qubit states $\rho$ and $\varrho$ are the same is equivalent to the relation $W=R$. Let $\cos{t} = \frac{t_{1}}{\sqrt{t_{1}^{2} + t_{2}^{2}}}$, $\sin{t} = \frac{t_{2}}{\sqrt{t_{1}^{2} + t_{2}^{2}}}$, $\cos{f} = \frac{f_{1}}{\sqrt{f_{1}^{2} + f_{2}^{2}}}$, and $\sin{f} = \frac{f_{2}}{\sqrt{f_{1}^{2} + f_{2}^{2}}}$. Then, the relation $W=R$ can be equivalently written as
\begin{align}
     & \cos{t} |\mu_{1}\rangle \langle \nu_{1}| + \sin{t}|\mu_{2}\rangle \langle \nu_{2}| \nonumber \\
   = & \cos{f} |\alpha_{1}\rangle \langle \beta_{1}| + \sin{f} |\alpha_{2}\rangle \langle \beta_{2}|.
\end{align}
Let $M=(m_{kj})$ be a $2\times 2$ matrix where $m_{kj} = \langle \mu_{k}|\alpha_{j} \rangle \langle \beta_{j}|\nu_{k}\rangle$. Then, we get
\begin{align}
   \cos{t} & = m_{11} \cos{f} + m_{12} \sin{f}, \label{eq:cost}\\
   \sin{t} & = m_{21} \cos{f} + m_{22} \sin{f}, \\
   \cos{f} & = m_{11} \cos{t} + m_{21} \sin{t}, \\
   \sin{f} & = m_{12} \cos{t} + m_{22} \sin{t}. \label{eq:sinf}
\end{align}
Denote $|t\rangle \equiv \cos{t}|0\rangle + \sin{t}|1\rangle$ and $|f\rangle \equiv \cos{f}|0\rangle + \sin{f}|1\rangle$. Equations.~(\ref{eq:cost})--(\ref{eq:sinf}) can be equivalently written as
\begin{align}
  |t\rangle = M|f\rangle \text{ and } |f\rangle = M^{T}|t\rangle.
\end{align}
It follows that $|t\rangle = MM^{T}|t\rangle$ and $|f\rangle = M^{T}M|f\rangle$ for all possible vectors $|t\rangle$ and $|f\rangle$. Thus, it must have $M^{T}M = MM^{T} = I$, which means $M$ is an orthogonal matrix. Namely, $ m_{11}^{2} + m_{12}^{2} = 1$ and $ m_{21}^{2} + m_{22}^{2} = 1$.

However, we note that
\begin{align}
     m_{11}^{2} + m_{12}^{2}  = (\langle \mu_{1}|\alpha_{1} \rangle \langle \beta_{1}|\nu_{1}\rangle)^{2} +(\langle \mu_{1}|\alpha_{2} \rangle \langle \beta_{2}|\nu_{1}\rangle)^{2} \le 1 
       \label{ineq:m11m12}
\end{align}
and
\begin{align}
     m_{21}^{2} + m_{22}^{2}  = (\langle \mu_{2}|\alpha_{1} \rangle \langle \beta_{1}|\nu_{2}\rangle)^{2} +(\langle \mu_{2}|\alpha_{2} \rangle \langle \beta_{2}|\nu_{2}\rangle)^{2} \le 1.  \label{ineq:m21m22} 
\end{align}
Thus, the equalities in Eqs.~(\ref{ineq:m11m12}) and (\ref{ineq:m21m22}) must hold simultaneously. The conditions for which the equalities hold true are $|\mu_{k}\rangle = \delta_{k} |\alpha_{k}\rangle$ and $|\nu_{k}\rangle = \delta_{k}' |\beta_{k}\rangle$ for $k=1,2$ or $|\mu_{1}\rangle = \delta_{1} |\alpha_{2}\rangle$, $|\mu_{2}\rangle = \delta_{2} |\alpha_{1}\rangle$, $|\nu_{1}\rangle = \delta_{1}' |\beta_{2}\rangle$, and $|\mu_{2}\rangle = \delta_{2} |\alpha_{1}\rangle$ where $\delta_{k},\delta_{k}' \in \{\pm 1\}$ are signs.
With out loss of generality, we can set $\delta_{k}=\delta_{k}' = 1$ and pat the signs into the singular values. In the former case of conditions, it follows that $|\mu_{k}\rangle = |\alpha_{k}\rangle$ and $|\nu_{k}\rangle =  |\beta_{k}\rangle$ for $k=1,2,3$ and $\cos{t} = \cos{f}$, $\sin{t} = \sin{f}$.  Hence, the correlation matrices of states $\rho$ and $\varrho$ are related as follows:
\begin{align}
   T & = \sum_{k=1}^{3} t_{k} |\mu_{k}\rangle \langle \nu_{k} | \nonumber \\
     & = \sqrt{t_{1}^{2} + t_{2}^{2}}( \cos{t} |\alpha_{1}\rangle \langle \beta_{1} | + \sin{t} |\alpha_{2}\rangle \langle \beta_{2} |) + t_{3} |\alpha_{3}\rangle \langle \beta_{3} |  \nonumber \\
     & = \sqrt{\frac{t_{1}^{2} + t_{2}^{2}}{f_{1}^{2} + f_{2}^{2}}}( f_{1} |\alpha_{1}\rangle \langle \beta_{1} | + f_{2} |\alpha_{2}\rangle \langle \beta_{2} |) + t_{3} |\alpha_{3}\rangle \langle \beta_{3} |. \label{eq:2statesameScond}
\end{align}
If the second condition holds, we get the same relation as the one shown in Eq.~(\ref{eq:2statesameScond}).

A more general case is that $|t_{1}|$, $|t_{2}|$, and $|t_{3}|$ are in a specified order, say, decreasing order, while $|f_{1}|$, $|f_{2}|$, and $|f_{3}|$ are in an arbitrary order. Without loss of generality, suppose $|f_{2}|$ is the minimum in $\{|f_{k}|\}$. With similar analysis, we find out that the relation $W=R$ can be equivalently written as
\begin{align}
  T = \sqrt{\frac{t_{1}^{2} + t_{2}^{2}}{f_{1}^{2} + f_{2}^{2}}}( f_{1} |\alpha_{1}\rangle \langle \beta_{1} | + f_{3} |\alpha_{3}\rangle \langle \beta_{3} |) + t_{3} |\alpha_{2}\rangle \langle \beta_{2} |.
\end{align}
Obviously, the absolute values of the correlation matrices $T$ and $F$ singular values are in the same order.

Based on the above analysis, we can conclude the conditions for the situation that two different general two-qubit states have the same optimal CHSH operator in the following theorem.

\begin{theorem}\label{thm:conditionforsameCHSHoperator}
   Suppose $T$ is the correlation matrix of the two-qubit state $\rho$ and $F$ is that of the two-qubit state $\varrho$. Let $t_1$, $t_2$, $t_3$ and $f_1$, $f_2$, $f_3$ be singular values of the correlation matrices $T$ and $F$, respectively. Then, the states $\rho$ and $\varrho$ have the same optimal CHSH operator if and only if they satisfy three conditions, which are listed as follows:
   \begin{enumerate}[(1)]
     \item $T$ and $F$ have the same orthogonal matrices in the singular value decompositions. Namely, the singular value decomposition of $T$ and $F$ can be written as
         \begin{align}
            T = UDV \text{  and   } F = UEV,
         \end{align}
         where $D = \sum_{k=1}^{3} t_{k}|k\rangle \langle k|$, $E = \sum_{k=1}^{3} f_{k}|k\rangle \langle k|$, and $U$ and $V$ are orthogonal matrices.

     \item The absolute values of two sets of singular values, $\{|t_{k}|\}$ and $\{|f_{k}|\}$, are of the same order.

     \item Without loss of generality, suppose $\{|t_{k}|\}$ and $\{|f_{k}|\}$ are in nonincreasing order. Then, it should have
         \begin{align}
            \frac{t_{1}}{f_{1}} = \frac{t_{2}}{f_{2}} = \sqrt{\frac{t_{1}^{2} + t_{2}^{2}}{f_{1}^{2} + f_{2}^{2}}}.
         \end{align}

   \end{enumerate}
\end{theorem}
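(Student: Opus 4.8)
The plan is to recast the geometric statement ``$\rho$ and $\varrho$ admit a common optimal CHSH operator'' as a single algebraic identity and then analyze that identity through the singular value decomposition. By Lemma~\ref{lemma:OptimalCHSHOperator} the optimal CHSH operator of a two-qubit state is uniquely determined by its $3\times 3$ Pauli coefficient matrix, so the two states share an optimal CHSH operator if and only if $W = R$, where
\begin{align}
   W = \frac{2}{\sqrt{t_{1}^{2}+t_{2}^{2}}}\big(t_{1}|\mu_{1}\rangle\langle\nu_{1}| + t_{2}|\mu_{2}\rangle\langle\nu_{2}|\big)
\end{align}
is built from the top two singular triples of $T = UDV$, and $R$ is the analogous object built from $F = PEQ$. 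Both $W$ and $R$ are thus already presented in a normalized singular-value form, which is the structure I would exploit.

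For the sufficiency direction I would simply substitute the three hypotheses. Conditions (1) and (2) force $|\mu_{k}\rangle = |\alpha_{k}\rangle$ and $|\nu_{k}\rangle = |\beta_{k}\rangle$ for the two leading indices, so $W$ and $R$ are supported on the same rank-one dyads; condition (3) rearranges to $t_{1}/\sqrt{t_{1}^{2}+t_{2}^{2}} = f_{1}/\sqrt{f_{1}^{2}+f_{2}^{2}}$ and the analogous relation for the second singular value, which makes the two coefficient vectors equal and hence $W = R$. This is a short, direct check.

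The substance lies in the necessity direction. Starting from $W = R$ I would write both sides in the trigonometric normalization $\cos t = t_{1}/\sqrt{t_{1}^{2}+t_{2}^{2}}$ and its analogues, so the identity reads $\cos t\,|\mu_{1}\rangle\langle\nu_{1}| + \sin t\,|\mu_{2}\rangle\langle\nu_{2}| = \cos f\,|\alpha_{1}\rangle\langle\beta_{1}| + \sin f\,|\alpha_{2}\rangle\langle\beta_{2}|$. Contracting this identity against the singular vectors on both sides produces the overlap matrix $M=(m_{kj})$ with $m_{kj}=\langle\mu_{k}|\alpha_{j}\rangle\langle\beta_{j}|\nu_{k}\rangle$ and the coupled relations $|t\rangle = M|f\rangle$, $|f\rangle = M^{T}|t\rangle$. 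Since these must hold for the unit vectors arising from every admissible pairing, I would conclude $M^{T}M = MM^{T} = I$, i.e.\ $M$ is orthogonal, giving $m_{11}^{2}+m_{12}^{2}=1$ and $m_{21}^{2}+m_{22}^{2}=1$. The final step is to confront these equalities with the Cauchy--Schwarz-type bounds $m_{11}^{2}+m_{12}^{2}\le 1$ and $m_{21}^{2}+m_{22}^{2}\le 1$; their simultaneous saturation forces each leading singular vector of $T$ to coincide (up to sign) with one of $F$, which yields condition (1) after absorbing signs into the singular values and permuting to a common order as in condition (2). Reading off the matching of $\cos t,\sin t$ with $\cos f,\sin f$ then gives condition (3).

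The main obstacle I anticipate is the non-uniqueness of the singular value decomposition when the two leading singular values are degenerate (e.g.\ $|t_{1}|=|t_{2}|$), where the singular vectors are not individually determined and naive SVD matching fails. The orthogonal-$M$ argument is precisely the device that sidesteps this: it encodes the overlaps intrinsically, and its forced orthogonality combined with the norm bounds pins down the vectors uniformly across the degenerate and non-degenerate cases. A secondary, purely bookkeeping, difficulty is tracking the sign ambiguities $\delta_{k},\delta_{k}'\in\{\pm 1\}$ and the two possible index pairings, which I would dispose of by folding the signs into the singular values and verifying that the alternate pairing leads to the same correlation-matrix relation.
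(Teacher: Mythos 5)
Your proposal follows essentially the same route as the paper's own proof: reduce via Lemma~\ref{lemma:OptimalCHSHOperator} to the matrix identity $W=R$, pass to the trigonometric normalization and the overlap matrix $M$ with the coupled relations $|t\rangle = M|f\rangle$ and $|f\rangle = M^{T}|t\rangle$, conclude that $M$ is orthogonal, and then saturate the Cauchy--Schwarz-type bounds $m_{11}^{2}+m_{12}^{2}\le 1$ and $m_{21}^{2}+m_{22}^{2}\le 1$ to pin down the singular vectors up to signs and a pairing, exactly as the paper does. The only differences are presentational (you state the sufficiency check explicitly, which the paper leaves implicit in its chain of equivalences), so the proposal matches the paper's argument in substance.
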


\section{The collection of two-qubit pure states that share the same optimal CHSH operator}\label{sec:2qubitSameOptimalCHSHOperator}
Recall that any mixed state can be viewed as an ensemble of pure states. Suppose $\rho$ is an arbitrary two-qubit state which is generated from the ensemble $\{ p_{k}, |\psi_{k}\rangle \}$. Namely, $\rho = \sum_{k}p_{k} |\psi_{k}\rangle \langle \psi_{k}|$, where $p_{k}\ge 0$ and $\sum_{k}p_{k} = 1$. The convexity of nonlocality indicates that
\begin{align}\label{eq:convexityofpure2QB}
    \mathcal{N}(\rho)  \le \sum_{k}p_{k} \mathcal{N}(\psi_{k}).
\end{align}
The equality holds if and only if the optimal CHSH operator of $\rho$ also acts as the optimal CHSH operator of $|\psi_{k}\rangle$ for all $k$. In other words, it requires that the collection of pure states $\{|\psi_{k}\rangle \}$ has the same optimal CHSH operator.

In this section, we exploit the collection of two-qubit pure states which have the same optimal CHSH operator.

\subsection{The role of local unitary operations}
Suppose a general two-qubit state $\rho$ is shared by Alice and Bob. $U_{A}$ and $U_{B}$ are unitary operators which act on Alice's and Bob's subsystems, respectively. They can transform the state $\rho$ into another state
\begin{align}
   \rho' = (U_{A} \otimes U_{B}) \rho (U_{A} \otimes U_{B})^{\dagger}.
\end{align}
Suppose the optimal CHSH operators for $\rho$ and $\rho'$ are $S$ and $S'$, respectively. Then, it follows that
\begin{align}
   \mathcal{N}(\rho') & = {\rm tr}(S'\rho') \nonumber \\
       & = {\rm tr}((U_{A} \otimes U_{B})^{\dagger}S'(U_{A} \otimes U_{B}) \rho) \nonumber \\
       & \le \mathcal{N}(\rho) \nonumber \\
       & = {\rm tr}(S\rho). \nonumber
\end{align}
Similarly, it has ${\rm tr}(S\rho) \le {\rm tr}((U_{A} \otimes U_{B})^{\dagger}S'(U_{A} \otimes U_{B}) \rho)$. Thus, the equality should hold, namely, ${\rm tr}(S\rho) = {\rm tr}((U_{A} \otimes U_{B})^{\dagger}S'(U_{A} \otimes U_{B}) \rho)$. Therefore, it should have $\mathcal{N}(\rho') = \mathcal{N}(\rho)$ and $S' = (U_{A} \otimes U_{B})S(U_{A} \otimes U_{B})^{\dagger}$. It also indicates that local unitary operations do not affect the nonlocality of the shared state.

Suppose the correlation matrices of $\rho$ and $\rho'$ are $T=(T_{kj})_{3\times 3}$ and $T'=(T'_{kj})_{3\times 3}$, respectively. Then, any element $T_{kj}'$ of the correlation matrix $T'$ can be equivalently rewritten as
\begin{align}
   T_{kj}' = &  {\rm tr}(\rho' \sigma_{k} \otimes \sigma_{j}) \nonumber \\
           = & \frac{1}{4} \sum_{k',j' = 1}^{3} T_{k'j'}{\rm tr}(\sigma_{k'}U_{A}^{\dagger} \sigma_{k}U_{A}) {\rm tr}(\sigma_{j'}U_{B}^{\dagger} \sigma_{j}U_{B}). \nonumber
\end{align}
Let $R_{A}$ and $R_{B}$ be two matrices with elements defined as $(R_A)_{kk'} \equiv \frac{1}{2}{\rm tr}(\sigma_{k}U_{A}\sigma_{k'}U_{A}^{\dagger})$ and $(R_B)_{jj'} \equiv \frac{1}{2}{\rm tr}(\sigma_{j}U_{B}\sigma_{j'}U_{B}^{\dagger})$, respectively.
It follows that $T_{kj}' = \sum_{k',j' = 1}^{3} (R_A)_{kk'} T_{k'j'}(R_B)_{jj'} = (R_A TR_B^T)_{kj}$. Thus, it has
\begin{align}
   T' = R_A TR_B^T.
\end{align}
Note that SU(N) matrices have the completeness relation as follows~\cite{JS95}:
\begin{align}\label{eq:SUNcompleteness}
   \sum_{j=1}^{N^{2} - 1} (\sigma_{j})_{ki}(\sigma_{j})_{mn} = 2\Delta_{im}\Delta_{kn} - \frac{2}{N}\delta_{ki}\Delta_{mn},
\end{align}
where $\Delta_{jk} = 1$ if $j=k$ and  $\Delta_{jk} = 0$ otherwise.
Applying the completeness relation in Eq.~(\ref{eq:SUNcompleteness}), it can be shown that $R^{A}$ and $R^{B}$ are orthogonal matrices.

We can conclude the above analysis as the following lemma.
\begin{lemma}\label{lm:LUonC.M.}
   Suppose Alice and Bob apply the local unitaries $U_A$ and $U_B$ on subsystems of the shared two-qubit state $\rho$, respectively. Suppose the correlation matrix of the state $\rho$ is $T$. Then, the correlation matrix of the final state can be written as
   \begin{align}
      T' = R_A TR_B^T,
   \end{align}
   where $R_{A}$ and ${R_{B}}^{T}$ are two $3\times 3$  orthogonal real matrices defined as
   \begin{align}\label{eq:C.M.RotationA}
      (R_A)_{kk'} \equiv &  \frac{1}{2}{\rm tr}(\sigma_{k}U_{A}\sigma_{k'}U_{A}^{\dagger})
   \end{align}
   and
   \begin{align}\label{eq:C.M.RotationB}
      (R_B)_{jj'} \equiv &  \frac{1}{2}{\rm tr}(\sigma_{j}U_{B}\sigma_{j'}U_{B}^{\dagger}),
   \end{align}
   respectively.
\end{lemma}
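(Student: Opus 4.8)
The plan is to obtain the transformation law by a direct computation of the transformed correlation-matrix entries, and then to establish orthogonality of the two rotation matrices as a separate, self-contained step. First I would start from the definition $T'_{kj} = {\rm tr}(\rho'\,\sigma_k \otimes \sigma_j)$ and substitute $\rho' = (U_A \otimes U_B)\rho(U_A \otimes U_B)^\dagger$, writing $(U_A \otimes U_B)^\dagger = U_A^\dagger \otimes U_B^\dagger$. Using the cyclic property of the trace to carry $(U_A \otimes U_B)$ around to the right, the unitaries land on the Pauli observables and give
\[
   T'_{kj} = {\rm tr}\bigl(\rho\,(U_A^\dagger \sigma_k U_A) \otimes (U_B^\dagger \sigma_j U_B)\bigr).
\]

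The key observation is that for any single-qubit unitary $U_A$, the operator $U_A^\dagger \sigma_k U_A$ is traceless and Hermitian, hence a real linear combination of the three Pauli matrices. Writing $U_A^\dagger \sigma_k U_A = \sum_{k'} (R_A)_{kk'}\,\sigma_{k'}$ and extracting the coefficient through ${\rm tr}(\sigma_{k'}\sigma_{k''}) = 2\Delta_{k'k''}$ reproduces, after one application of cyclicity, exactly the definition of $(R_A)_{kk'}$ in Eq.~(\ref{eq:C.M.RotationA}); the same manipulation on Bob's side recovers Eq.~(\ref{eq:C.M.RotationB}). Substituting both expansions and collecting indices yields $T'_{kj} = \sum_{k',j'} (R_A)_{kk'}\,T_{k'j'}\,(R_B)_{jj'}$, which is precisely the $(k,j)$ entry of $R_A T R_B^T$, giving the claimed formula.

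It then remains only to show that $R_A$ and $R_B$ lie in $O(3)$. The cleanest route is to use that conjugation by $U_A$ preserves the Hilbert--Schmidt inner product: ${\rm tr}\bigl((U_A^\dagger \sigma_k U_A)(U_A^\dagger \sigma_j U_A)\bigr) = {\rm tr}(\sigma_k \sigma_j) = 2\Delta_{kj}$, while expanding the left-hand side with the $R_A$ coefficients gives $2\,(R_A R_A^T)_{kj}$. Hence $R_A R_A^T = I$, and for a $3\times 3$ real matrix this forces orthogonality; the identical argument applies to $R_B$. This provides a direct alternative to invoking the SU(N) completeness relation of Eq.~(\ref{eq:SUNcompleteness}), which the surrounding text uses to the same end.

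The main obstacle is not any single step but the index bookkeeping: one must keep the dagger placements and the ordering of the summation indices consistent so that the final contraction assembles as $R_A T R_B^T$ rather than a transpose or a mismatched product. The conceptual content worth stating plainly is that the orthonormality of the Pauli basis under the trace inner product is exactly what pins $R_A$ and $R_B$ to orthogonal matrices, so I would present the inner-product computation as the heart of the orthogonality claim and treat the transformation formula itself as a routine trace identity.
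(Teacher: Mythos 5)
Your proof is correct, and it diverges from the paper's in a worthwhile way. For the transformation law itself the two arguments are essentially the same computation organized differently: the paper expands the state $\rho$ in the Pauli basis and lets the identity and single-qubit terms die by tracelessness, arriving at $T'_{kj} = \tfrac{1}{4}\sum_{k',j'}T_{k'j'}\,{\rm tr}(\sigma_{k'}U_A^\dagger\sigma_k U_A)\,{\rm tr}(\sigma_{j'}U_B^\dagger\sigma_j U_B)$, whereas you expand the conjugated observables $U_A^\dagger\sigma_k U_A$ and $U_B^\dagger\sigma_j U_B$ in the Pauli basis and contract directly against the definition $T_{k'j'} = {\rm tr}(\rho\,\sigma_{k'}\otimes\sigma_{j'})$; your version is marginally tighter since it never needs the full Bloch decomposition of $\rho$. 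The genuine difference is in the orthogonality step: the paper invokes the SU(N) completeness relation of Eq.~(\ref{eq:SUNcompleteness}) (citing Schlienz and Mahler) and leaves the verification to the reader, while you derive $R_A R_A^T = I$ from the invariance of the Hilbert--Schmidt inner product under unitary conjugation, ${\rm tr}\bigl((U_A^\dagger\sigma_k U_A)(U_A^\dagger\sigma_j U_A)\bigr) = {\rm tr}(\sigma_k\sigma_j) = 2\Delta_{kj}$, together with the orthogonality ${\rm tr}(\sigma_{k'}\sigma_{j'}) = 2\Delta_{k'j'}$ of the Pauli basis. Your route is more elementary and self-contained (no external identity needed, and the reality of the entries falls out of the Hermiticity argument you already made), at the cost of being specific to this conjugation-invariance structure; the completeness relation the paper cites is the more general tool if one wanted the analogous statement for SU(N) with $N>2$. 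One small point worth making explicit in your write-up: $R_A R_A^T = I$ for a square real matrix indeed forces $R_A^T R_A = I$ as well, since a one-sided inverse of a square matrix is two-sided.
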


\subsection{The class of two-qubit pure state pairs with the same optimal CHSH operators}
We consider a general two-qubit pure state $|\psi\rangle$ with correlation matrix $T$. Suppose the singular value decomposition of $T$ is $T=R_{A}DR_{B}$, where $D$ is a diagonal matrix and $R_{A}$ and $R_{B}$ are orthogonal matrices.
Let $U_{A}$ and $U_{B}$ be the unitary operators that generate the orthogonal matrices $R_{A}^{T}$ and $R_{B}$ according to the definitions in Eqs.~(\ref{eq:C.M.RotationA}) and (\ref{eq:C.M.RotationB}), respectively. According to Lemma~\ref{lm:LUonC.M.}, the correlation matrix of the state $(U_A\otimes U_B)|\psi\rangle$ is the diagonal matrix $D$. Namely, any two-qubit pure state can be transformed into another pure state with diagonal correlation matrix via local unitary operations.

Suppose $|\varphi\rangle$ is a two-qubit pure state which has the same optimal CHSH operator with $|\psi\rangle$. Combining Theorem~\ref{thm:conditionforsameCHSHoperator}, it is obvious that any two-qubit pure states $|\psi\rangle$ and $|\varphi\rangle$ can be simultaneously transformed into the states which have diagonal correlation matrices, by applying the same local unitary operations. Thus, we only need to look into the kind of two-qubit states with diagonal correlation matrices in order to analyze the class of two-qubit pure state pairs which have the same optimal CHSH operators.

Up to an ignorable global phase, a general two-qubit pure state can be equivalently written as
\begin{align}
   |\psi\rangle = a|00\rangle + (b_{1} + ib_{2})|01\rangle + (c_{1} + ic_{2})|01\rangle + (d_{1} + id_{2})|01\rangle  \nonumber
\end{align}
where the parameters are all real numbers that satisfy the unit condition $a^2 + b_1^2 + b_2^2 + c_1^2 + c_2^2 + d_1^2 + d_2^2 = 1$. The elements of the corresponding correlation matrix $T=(T_{ij})_{3\times 3}$ are
\begin{align}
   T_{12} & = 2(-ad_{2} - b_{2}c_{1} + b_{1}c_{2}), \nonumber \\
   T_{13} & = 2(ac_{1} - b_{1}d_{1} - b_{2}d_{2}), \nonumber \\
   T_{21} & = 2(-ad_{2} + b_{2}c_{1} - b_{1}c_{2}), \nonumber \\
   T_{23} & = 2(-ac_{2} - b_{2}d_{1} + b_{1}d_{2}), \nonumber \\
   T_{31} & = 2(ab_{1} - c_{1}d_{1} - c_{2}d_{2}),  \nonumber \\
   T_{32} & = 2(-ab_{2} - c_{2}d_{1} + c_{1}d_{2}), \nonumber \\
   T_{11} & = 2(ad_{1} + b_{1}c_{1} + b_{2}c_{2}), \nonumber \\
   T_{22} & = 2(-ad_{1} + b_{1}c_{1} + b_{2}c_{2}),   \nonumber \\
   T_{33} & = 2(a^{2} + d_{1}^{2} + d_{2}^{2}) - 1. \nonumber
\end{align}
By setting $T_{kj} = 0$ for all $k\ne j \in \{1,2,3\}$, we derive that any two-qubit pure state, of which the correlation matrix is diagonal, should be in one of the forms listed as follows:
\begin{align}
   |\gamma_{(\theta)}\rangle & = \cos{\theta} |00\rangle + \sin{\theta} |11\rangle, \label{eq:stateDiagonalCM1}  \\
   |\omega_{(\theta)}\rangle & = \cos{\theta} |01\rangle + \sin{\theta} |10\rangle,  \label{eq:stateDiagonalCM2} \\
   |\lambda_{(\theta,\delta)}\rangle & = \cos{\theta} (|00\rangle + \delta |11\rangle)/\sqrt{2} + i\sin{\theta}(|01\rangle - \delta|10\rangle)/\sqrt{2}, \label{eq:stateDiagonalCM3} \\
   |\phi_{(\theta,\delta)} \rangle & = \cos{\theta} (|00\rangle + \delta |11\rangle)/\sqrt{2} + \sin{\theta}(|01\rangle + \delta|10\rangle)/\sqrt{2}, \label{eq:stateDiagonalCM4}
\end{align}
where $\theta\in[0,\pi)$ and $\delta = \pm 1$. The corresponding correlation matrices are
\begin{align}
   T_{\gamma}(\theta) & =\left[\begin{array}{ccc}
      \sin{2\theta} & 0 & 0 \\
      0 &  -\sin{2\theta} & 0 \\
      0 & 0 & 1
   \end{array}\right], \nonumber \\
   T_{\omega}(\theta) & =\left[\begin{array}{ccc}
      \sin{2\theta} & 0 & 0 \\
      0 &  \sin{2\theta} & 0 \\
      0 & 0 & -1
   \end{array}\right], \nonumber \\
   T_{\alpha}(\theta,\delta) & =\left[\begin{array}{ccc}
      \delta\cos{2\theta} & 0 & 0 \\
      0 &  -\delta & 0 \\
      0 & 0 & \cos{2\theta}
   \end{array}\right], \nonumber \\
   T_{\beta}(\theta,\delta) & =\left[\begin{array}{ccc}
      \delta & 0 & 0 \\
      0 &  -\delta\cos{2\theta} & 0 \\
      0 & 0 & \cos{2\theta}
   \end{array}\right]. \nonumber
\end{align}
Applying Theorem~\ref{thm:conditionforsameCHSHoperator}, we find out that there are no pairs of two-qubit pure states which are of different forms than in Eqs.~(\ref{eq:stateDiagonalCM1})--(\ref{eq:stateDiagonalCM4}) and have the same optimal CHSH operator.
However, there are pairs of states which are of the same form as in Eqs.~(\ref{eq:stateDiagonalCM1})--(\ref{eq:stateDiagonalCM4}) and have the same optimal CHSH operator. The collection of such state pairs is listed as follows:
\begin{align}
   \Gamma  \equiv & \{(|\gamma_{(\theta)}\rangle, |\gamma_{(\frac{\pi}{2} - \theta)}\rangle) \},   \nonumber \\
   \Omega  \equiv & \{(|\omega_{(\theta)}\rangle, |\omega_{(\frac{\pi}{2} - \theta)}\rangle) \},  \nonumber  \\
   \Lambda \equiv & \{(|\lambda_{(\theta,\delta)}\rangle, |\lambda_{(-\theta,\delta)}\rangle) \}, \nonumber  \\
   \Phi    \equiv & \{(|\phi_{(\theta,\delta)}\rangle, |\phi_{(-\theta,\delta)}\rangle) \}.  \nonumber
\end{align}
Moreover, we could not find any three two-qubit pure states which are of the same form and have the same optimal CHSH operator. Thus, at most two different two-qubit pure states could have the same optimal CHSH operator.

As any two-qubit pure state can be written as $|\gamma_{(\theta)}\rangle = \cos{\theta} |00\rangle + \sin{\theta} |11\rangle$ up to some local unitaries, we can get any state of the types $|\omega_{(\theta)}\rangle$, $|\lambda_{(\theta,\delta)}\rangle$, or $|\phi_{(\theta,\delta)} \rangle$ by applying the corresponding local unitary operations on some state of the type $|\gamma_{(\theta)}\rangle$. Thus, we only need to consider the collection $\Gamma$. By applying Lemma~\ref{lemma:OptimalCHSHOperator}, we find out that the two-qubit pure states $|\gamma_{(\theta)}\rangle$ and $|\gamma_{(\frac{\pi}{2}-\theta)}\rangle$ share the same optimal CHSH operators $S_{\theta +} = \frac{2}{\sqrt{1+\sin^{2}{2\theta}}}(\sin{2\theta} \sigma_{x} \otimes \sigma_{x} + \sigma_{z} \otimes \sigma_{z})$ and $S_{\theta -} = \frac{2}{\sqrt{1+\sin^{2}{2\theta}}}(-\sin{2\theta} \sigma_{y} \otimes \sigma_{y} + \sigma_{z} \otimes \sigma_{z})$.

We conclude the above analysis as the following theorem, which is the main result of this section.
\begin{theorem}\label{thm:pure2QBsameOCHSHOperator}
   Two different two-qubit pure states $|\psi\rangle$ and $|\psi'\rangle$ have the same optimal CHSH operator if and only if they can be written as
   \begin{align}
      |\psi\rangle & = (U_{A}\otimes U_{B})(\cos{\theta} |00\rangle + \sin{\theta} |11\rangle) \text{ and} \nonumber\\
      |\psi'\rangle & = (U_{A}\otimes U_{B})(\sin{\theta} |00\rangle + \cos{\theta} |11\rangle), \nonumber
   \end{align}
   where $U_{A}, U_{B}$ are some unitaries on $\mathcal{H}_{2}$ and $\theta\in[0,\pi)$. The operators $(U_{A}\otimes U_{B})S_{\theta +}(U_{A}\otimes U_{B})^{\dagger}$ and $(U_{A}\otimes U_{B})S_{\theta -}(U_{A}\otimes U_{B})^{\dagger}$ act as the the optimal CHSH operators for both $|\psi\rangle$ and $|\psi'\rangle$, where
   \begin{align}
      S_{\theta +} & = \frac{2}{\sqrt{1+\sin^{2}{2\theta}}}(\sin{2\theta} \sigma_{x} \otimes \sigma_{x} + \sigma_{z} \otimes \sigma_{z}) \text{ and} \nonumber \\
      S_{\theta -} & = \frac{2}{\sqrt{1+\sin^{2}{2\theta}}}(-\sin{2\theta} \sigma_{y} \otimes \sigma_{y} + \sigma_{z} \otimes \sigma_{z}). \nonumber
   \end{align}
\end{theorem}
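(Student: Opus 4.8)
The plan is to establish both directions of the equivalence by reducing the general situation to states whose correlation matrices are diagonal, where the complete list of states is already available from Eqs.~(\ref{eq:stateDiagonalCM1})--(\ref{eq:stateDiagonalCM4}), and then invoking Lemma~\ref{lemma:OptimalCHSHOperator} and Theorem~\ref{thm:conditionforsameCHSHoperator}. Throughout I would use the fact, established in Sec.~\ref{sec:2qubitSameOptimalCHSHOperator}, that simultaneous local unitaries conjugate optimal CHSH operators, $S\mapsto (U_A\otimes U_B)S(U_A\otimes U_B)^{\dagger}$, and therefore preserve the property of two states sharing an optimal CHSH operator; this lets me pass freely between a state and its local-unitary images.

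For the sufficiency (``if'') direction, by the conjugation property it is enough to treat $U_A=U_B=I$, i.e. to show that $|\gamma_{(\theta)}\rangle=\cos\theta|00\rangle+\sin\theta|11\rangle$ and $\sin\theta|00\rangle+\cos\theta|11\rangle=|\gamma_{(\pi/2-\theta)}\rangle$ share an optimal CHSH operator. The key observation is that, since $\sin 2(\pi/2-\theta)=\sin 2\theta$, these two states have the \emph{same} diagonal correlation matrix $T_{\gamma}(\theta)=\mathrm{diag}(\sin 2\theta,-\sin 2\theta,1)$. Feeding this matrix into Lemma~\ref{lemma:OptimalCHSHOperator}, and exploiting the degeneracy of the two largest eigenvalues of $T^{T}T$ (the $x$- and $y$-directions both contribute $\sin^{2}2\theta$, while the $z$-direction contributes $1$), produces exactly the two operators $S_{\theta+}$ and $S_{\theta-}$ as optimal CHSH operators of \emph{both} states. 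Conjugating by $U_A\otimes U_B$ yields the operators asserted for the general pair.

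For the necessity (``only if'') direction, I would start from two pure states $|\psi\rangle,|\psi'\rangle$ with a common optimal CHSH operator and correlation matrices $T,F$. Theorem~\ref{thm:conditionforsameCHSHoperator}(1) guarantees singular value decompositions $T=UDV$ and $F=UEV$ with the \emph{same} orthogonal factors $U,V$. Choosing, via Lemma~\ref{lm:LUonC.M.}, local unitaries whose induced rotations are $U^{T}$ and $V$ then carries both states simultaneously to pure states with the diagonal correlation matrices $D$ and $E$. Each transformed state must therefore occur in the classification of Eqs.~(\ref{eq:stateDiagonalCM1})--(\ref{eq:stateDiagonalCM4}), so the problem collapses to deciding which pairs among the four families $|\gamma\rangle,|\omega\rangle,|\lambda\rangle,|\phi\rangle$ can share an optimal CHSH operator.

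The core of the argument, and the step I expect to be the main obstacle, is this finite case analysis: I would test conditions (1)--(3) of Theorem~\ref{thm:conditionforsameCHSHoperator} against every ordered pair of the four families. The decisive constraint is condition (3), the ratio relation $t_1/f_1=t_2/f_2=\sqrt{(t_1^{2}+t_2^{2})/(f_1^{2}+f_2^{2})}$ on the two largest singular values, which must be shown to fail for all cross-family pairs and, within each family, to select exactly the pairs collected in $\Gamma,\Omega,\Lambda,\Phi$ and to forbid any three mutually compatible states. The delicate part is that the diagonal correlation matrices of different families can share singular-value \emph{magnitudes} for special angles (e.g. where $\sin 2\theta$ or $\cos 2\theta$ vanishes or equals $1$), so that conditions (1) and (2) alone do not separate them; I would have to confirm that condition (3), combined with the sign patterns of the diagonal entries (which distinguish, for instance, $|\gamma_{(\theta)}\rangle$ from $|\gamma_{(\theta+\pi/2)}\rangle$), still excludes these degenerate cases. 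Finally, since $|\omega\rangle$, $|\lambda\rangle$ and $|\phi\rangle$ are each the image of $|\gamma\rangle$ under a fixed ($\theta$-independent) local unitary, every admissible pair is local-unitary-equivalent to $(|\gamma_{(\theta)}\rangle,|\gamma_{(\pi/2-\theta)}\rangle)$, which is precisely the normal form in the statement, completing the proof.
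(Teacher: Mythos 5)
Your proposal is correct and follows essentially the same route as the paper's own argument: reduce both states simultaneously to diagonal correlation matrices via Lemma~\ref{lm:LUonC.M.} and condition (1) of Theorem~\ref{thm:conditionforsameCHSHoperator}, classify the pure states with diagonal correlation matrices into the four families of Eqs.~(\ref{eq:stateDiagonalCM1})--(\ref{eq:stateDiagonalCM4}), carry out the finite case analysis with Theorem~\ref{thm:conditionforsameCHSHoperator} to isolate the pairs $\Gamma,\Omega,\Lambda,\Phi$, and then use the fixed local-unitary equivalence of the $\omega,\lambda,\phi$ families with the $\gamma$ family to obtain the stated normal form and the operators $S_{\theta\pm}$ from Lemma~\ref{lemma:OptimalCHSHOperator}. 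Your sufficiency argument via the observation that $|\gamma_{(\theta)}\rangle$ and $|\gamma_{(\pi/2-\theta)}\rangle$ share the identical correlation matrix is a slightly cleaner packaging of the same computation the paper performs, and the case analysis you defer is likewise only asserted, not detailed, in the paper.
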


\section{Nonlocality upper bounded by entanglement}\label{sec:NLupperbound}
In this section, we show that the nonlocality of a general two-qubit state is upper bounded by a function of the corresponding entanglement. We further figure out the class of two-qubit states of which the upper bound of nonlocality can be reached.

Recall that any pure state of a two-qubit system can be written as
\begin{align}
   |\Phi_{\theta}\rangle = \cos{\theta} |00\rangle + \sin{\theta}|11\rangle, \theta\in[0,\frac{\pi}{4}]
\end{align}
upper to some local unitary transformations. As the entanglement and nonlocality are invariant under local unitary transformations, it is sufficient to investigate the state $|\Phi_{\theta}\rangle$ in order to qualitatively analyze the entanglement and nonlocality of an arbitrary two-qubit pure state.

In the previous section, we have figured out that the correlation matrix of the state $|\Phi_{\theta}\rangle$ is a diagonal matrix with diagonal entries being $\sin{2\theta}$, $-\sin{2\theta}$, and 1. By applying Lemma~\ref{lemma:OptimalCHSHOperator}, it is obvious that the nonlocality of the state $|\Phi_{\theta}\rangle$ is
\begin{align}
   \mathcal{N}(\Phi_{\theta}) = 2\sqrt{1 + \sin{2\theta}^{2}}.
\end{align}

According to the definition in Eq.~(\ref{eq:concurrence2qubitpure}), we get the concurrence of the state $|\Phi_{\theta}\rangle$  as follows
\begin{align}
   C(\Phi_{\theta}) = |{\rm tr}(|\Phi_{\theta}\rangle \langle \Phi_{\theta}| Y \otimes Y)| = |T_{22}| = |\sin{2\theta}|.
\end{align}

Thus, it is straightforward to get the relationship between the nonlocality and entanglement of a two-qubit pure state. We conclude the above analysis as the following lemma.
\begin{lemma}\label{lemma:NLandCC2QBpure}
   Suppose $|\psi\rangle$ is an arbitrary pure state of a two-qubit system. The nonlocality and concurrence of $|\psi\rangle$ are related by the following equation:
   \begin{align}
      \mathcal{N}(\psi) = 2\sqrt{1 + C(\psi)^{2}}.
   \end{align}
\end{lemma}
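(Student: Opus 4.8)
The plan is to reduce the general two-qubit pure state to its canonical Schmidt form and then read off both invariants directly from the correlation matrix. First I would invoke the fact, established earlier in the excerpt, that any two-qubit pure state $|\psi\rangle$ can be written as $(U_A \otimes U_B)|\Phi_\theta\rangle$ with $|\Phi_\theta\rangle = \cos\theta\,|00\rangle + \sin\theta\,|11\rangle$, and that both the nonlocality measure $\mathcal{N}(\cdot)$ and the concurrence $C(\cdot)$ are invariant under local unitaries. Thus it suffices to verify the identity $\mathcal{N}(\psi) = 2\sqrt{1 + C(\psi)^2}$ for the representative state $|\Phi_\theta\rangle$.

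The second step is to compute the two quantities for $|\Phi_\theta\rangle$ separately using the tools already in hand. For the nonlocality, I would use the diagonal correlation matrix of $|\Phi_\theta\rangle$, whose entries are $\sin 2\theta$, $-\sin 2\theta$, and $1$, so that $T^T T$ has eigenvalues $\sin^2 2\theta$, $\sin^2 2\theta$, and $1$. By Lemma~\ref{lemma:OptimalCHSHOperator}, the nonlocality is $\mathcal{N}(\Phi_\theta) = 2\sqrt{\lambda_1 + \lambda_2} = 2\sqrt{1 + \sin^2 2\theta}$, the two largest eigenvalues being $1$ and $\sin^2 2\theta$. For the concurrence, I would apply the definition in Eq.~(\ref{eq:concurrence2qubitpure}), evaluating $C(\Phi_\theta) = |\langle \Phi_\theta | \tilde{\Phi}_\theta\rangle| = |\sin 2\theta|$, which also equals the $|T_{22}|$ entry of the correlation matrix.

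The final step is to combine these two computations: substituting $C(\Phi_\theta)^2 = \sin^2 2\theta$ into the nonlocality expression yields $\mathcal{N}(\Phi_\theta) = 2\sqrt{1 + C(\Phi_\theta)^2}$ immediately, and local-unitary invariance promotes this from $|\Phi_\theta\rangle$ to the arbitrary pure state $|\psi\rangle$. There is no serious obstacle here; the statement is essentially a consolidation of the two separate calculations already carried out in the text. The only point requiring a little care is making explicit that the two largest eigenvalues of $T^T T$ are $1$ and $\sin^2 2\theta$ (so that the smallest, also $\sin^2 2\theta$, is correctly discarded) for all $\theta$ in the relevant range, and confirming that the Schmidt reduction together with the invariance of both measures genuinely covers every two-qubit pure state rather than just the canonical family.
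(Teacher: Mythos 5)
Your proposal is correct and follows essentially the same route as the paper: reduce to the Schmidt form $\cos\theta\,|00\rangle + \sin\theta\,|11\rangle$ via local-unitary invariance, read off $\mathcal{N}(\Phi_\theta) = 2\sqrt{1+\sin^{2}2\theta}$ from the diagonal correlation matrix using Lemma~\ref{lemma:OptimalCHSHOperator}, compute $C(\Phi_\theta) = |\sin 2\theta|$ from the definition in Eq.~(\ref{eq:concurrence2qubitpure}), and combine. Your added care about identifying the two largest eigenvalues of $T^{T}T$ is a sound (if implicit in the paper) refinement, not a deviation.
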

This simple relationship between the nonlocality and concurrence of a general two-qubit pure state has been mentioned in the literature~\cite{WJ01,WX02}.

Now, we consider the relationship between the nonlocality and entanglement when the general two-qubit states are considered. Verstraete and Wolf found that the nonlocality of a general two-qubit state is upper bounded by the function $2\sqrt{1 + C^{2}}$ of its concurrence $C$~\cite{FV02}. They found that the two-level density operator of the form
\begin{align}\label{eq:mtrixFV}
   \rho = \frac{1}{2}\left[\begin{array}{cccc}
      0 & 0 & 0 & 0 \\
      0 & 1-\alpha & C & 0 \\
      0 & C & 1 + \alpha & 0 \\
      0 & 0 & 0 & 0
   \end{array}
   \right]
\end{align}
can reach the upper bound. In fact, this is a necessary condition. Namely, any two-qubit state that can reach the upper bound must admit this form. However, they did not prove this fact in their paper. In the following, we rederive the upper bound and exploit the necessary and sufficient condition for reaching the upper bound. Finally, we prove the condition is equivalent to the example found by Verstraete and Wolf.

\begin{theorem}
   Suppose $\rho$ is an arbitrary density operator of a two-qubit system with concurrence being $C$. Then, the nonlocality of $\rho$ is upper bounded by
   \begin{equation}
      \mathcal{N}(\rho) \le 2\sqrt{1 + C^{2}},
   \end{equation}
   where the equality holds if and only if $\rho\in\mathcal{Q}$. The set $\mathcal{Q}$ of density operators on $\mathcal{H}_2^{\otimes 2}$ is defined as follows:
   \begin{align}
       \mathcal{Q} \equiv  \{(U_{A}\otimes U_{B})(p|\varphi_{1}\rangle \langle \varphi_{1}| + (1-p) |\varphi_{2}\rangle \langle \varphi_{2}| )(U_{A}\otimes U_{B})^{\dagger} \}  \nonumber
   \end{align}
   where $U_{A}$ and $U_{B}$ are arbitrary unitaries on $\mathcal{H}_{2}$, $p\in [0,1]$, $|\varphi_{1}\rangle = \cos{\theta} |00\rangle + \sin{\theta} |11\rangle$, and $|\varphi_{2}\rangle = \sin{\theta} |00\rangle + \cos{\theta} |11\rangle$  for some $\theta\in[0,\pi)$.
\end{theorem}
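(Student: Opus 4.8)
The plan is to combine the concurrence-matching decomposition of Lemma~\ref{lemma:decomposition4concurrence} with the convexity of $\mathcal{N}(\cdot)$ in Eq.~(\ref{ineq:convexityNL}) and the pure-state formula of Lemma~\ref{lemma:NLandCC2QBpure}. First I would invoke Lemma~\ref{lemma:decomposition4concurrence} to write $\rho=\sum_{k}p_{k}|\psi_{k}\rangle\langle\psi_{k}|$ with $C(\psi_{k})=C$ for every $k$. Applying convexity and then Lemma~\ref{lemma:NLandCC2QBpure} to each constituent yields
\begin{align}
   \mathcal{N}(\rho)\le \sum_{k}p_{k}\mathcal{N}(\psi_{k})=\sum_{k}p_{k}\,2\sqrt{1+C(\psi_{k})^{2}}=2\sqrt{1+C^{2}}, \nonumber
\end{align}
which establishes the upper bound. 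Since the only inequality here is the convexity step, the entire equality analysis reduces to understanding when Eq.~(\ref{ineq:convexityNL}) is tight for this particular decomposition.

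For the necessity of the equality condition I would use the stated characterization of equality in Eq.~(\ref{ineq:convexityNL}): the hypothesis $\mathcal{N}(\rho)=2\sqrt{1+C^{2}}$ forces the optimal CHSH operator $S$ of $\rho$ to be simultaneously optimal for \emph{every} $|\psi_{k}\rangle$ of the concurrence-matching decomposition, so all the $|\psi_{k}\rangle$ share one common optimal CHSH operator. Invoking Theorem~\ref{thm:pure2QBsameOCHSHOperator}, together with the accompanying fact that at most two distinct pure states can share an optimal CHSH operator, the set $\{|\psi_{k}\rangle\}$ takes at most two distinct values (phases being irrelevant to $|\psi_{k}\rangle\langle\psi_{k}|$), and these must be of the form $(U_{A}\otimes U_{B})|\varphi_{1}\rangle$ and $(U_{A}\otimes U_{B})|\varphi_{2}\rangle$ for one common pair of local unitaries. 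Collecting equal terms then gives $\rho=p\,(U_{A}\otimes U_{B})|\varphi_{1}\rangle\langle\varphi_{1}|(U_{A}\otimes U_{B})^{\dagger}+(1-p)\,(U_{A}\otimes U_{B})|\varphi_{2}\rangle\langle\varphi_{2}|(U_{A}\otimes U_{B})^{\dagger}$, i.e. $\rho\in\mathcal{Q}$; the degenerate single-value case is just the pure state recovered at $p\in\{0,1\}$.

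For sufficiency I would start from $\rho\in\mathcal{Q}$ and use the local-unitary invariance of $\mathcal{N}$ and $C$ to reduce to $\rho=p|\varphi_{1}\rangle\langle\varphi_{1}|+(1-p)|\varphi_{2}\rangle\langle\varphi_{2}|$. Because $|\varphi_{1}\rangle$ and $|\varphi_{2}\rangle$ share the optimal operators $S_{\theta+}$ and $S_{\theta-}$ by Theorem~\ref{thm:pure2QBsameOCHSHOperator}, the convexity relation is saturated and $\mathcal{N}(\rho)=p\,\mathcal{N}(\varphi_{1})+(1-p)\,\mathcal{N}(\varphi_{2})=2\sqrt{1+\sin^{2}2\theta}$. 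It then remains to verify $C(\rho)=|\sin 2\theta|$: since $\rho$ is supported on $\mathrm{span}\{|00\rangle,|11\rangle\}$ with off-diagonal element $\tfrac{1}{2}\sin 2\theta$ independent of $p$, Wootters' formula for this $X$-type matrix gives $C(\rho)=|\sin 2\theta|$, whence $\mathcal{N}(\rho)=2\sqrt{1+C^{2}}$ and the bound is reached.

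The main obstacle is the necessity direction: one must read off the tightness of the bound from the \emph{specific} concurrence-matching decomposition rather than an arbitrary one, so that the equality clause of Eq.~(\ref{ineq:convexityNL}) genuinely applies and yields a common optimal CHSH operator for all constituents; the classification into the form $\mathcal{Q}$ then rests essentially on Theorem~\ref{thm:pure2QBsameOCHSHOperator} and the at-most-two-states fact. A secondary point requiring care is the sufficiency computation of $C(\rho)$, where one checks that the saturated value $2\sqrt{1+\sin^{2}2\theta}$ is correctly matched to $2\sqrt{1+C^{2}}$, confirming that the two defining pure states of $\mathcal{Q}$ indeed share the state's concurrence.
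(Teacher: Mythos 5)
Your proposal follows essentially the same route as the paper's own proof: the concurrence-matching decomposition of Lemma~\ref{lemma:decomposition4concurrence}, convexity of $\mathcal{N}$ together with Lemma~\ref{lemma:NLandCC2QBpure} for the upper bound, and then the equality clause of Eq.~(\ref{ineq:convexityNL}) combined with Theorem~\ref{thm:pure2QBsameOCHSHOperator} and the at-most-two-states fact to characterize when the bound is attained. The only difference is that you explicitly verify the sufficiency direction --- checking that $C(\rho)=|\sin 2\theta|$ for $\rho\in\mathcal{Q}$, so the saturated value really equals $2\sqrt{1+C^{2}}$ --- a step the paper leaves implicit (it only exhibits the matrix of $\rho(p,\theta)$ after the proof); this is a welcome tightening but not a different approach.
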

\begin{proof}
   According to Lemma~\ref{lemma:decomposition4concurrence}, there is a decomposition $\rho = \sum_{k} p_{k} |\psi_{k}\rangle\langle \psi_{k}|$, where $p_{k} \ge 0$ and $\sum_{k}p_{k} = 1$, such that $C(\psi_{k}) = C$ for all states $|\psi_{k}\rangle$. Combining the convexity of nonlocality and Lemma~\ref{lemma:NLandCC2QBpure}, we have
   \begin{align}
      \mathcal{N}(\rho) & \le \sum_{k} p_{k} \mathcal{N}(\psi_{k}) \label{ineq:convexityofNL} \\
       & = \sum_{k} p_{k} 2\sqrt{1 + C(\psi_{k})^{2}} \nonumber \\
       &= 2\sqrt{1 + C^{2}}. \nonumber
   \end{align}

   The equality in Eq.~(\ref{ineq:convexityofNL}) holds if and only if the optimal CHSH operator of the state $\rho$ also acts as the optimal CHSH operator for the state $|\psi_{k}\rangle$ for all $k$. Namely, all pure states $|\psi_{k}\rangle$ should have the same optimal CHSH operator in order to achieve the equality. Through the discussion in Sec.~\ref{sec:2qubitSameOptimalCHSHOperator}, we find that at most two different two-qubit pure states could have the same optimal CHSH operator. According to Theorem~\ref{thm:pure2QBsameOCHSHOperator}, any two different two-qubit pure states have the same optimal CHSH operator if and only if they can be written as $U_{A}\otimes U_{B}(\cos{\theta} |00\rangle + \sin{\theta} |11\rangle)$ and $U_{A}\otimes U_{B}(\sin{\theta} |00\rangle + \cos{\theta} |11\rangle)$ for some local unitaries $U_{A}$ and $U_{B}$ on $\mathcal{H}_{2}$ and $\theta\in[0,\pi)$. Equivalently, the equality in Eq.~(\ref{ineq:convexityofNL}) holds if and only if the state $\rho \in \mathcal{Q}$.

   Therefore, we have
   \begin{equation}
      \mathcal{N}(\rho) \le 2\sqrt{1 + C^{2}}
   \end{equation}
   with equality if and only if $\rho \in \mathcal{Q}$.
\end{proof}
The matrix representation of the operator $\rho(p,\theta) = p|\varphi_{1}\rangle \langle \varphi_{1}| + (1-p) |\varphi_{2}\rangle \langle \varphi_{2}|$ can be written as
\begin{align}
   \rho(p,\theta) = \frac{1}{2} \left[ \begin{array}{cccc}
      1-\alpha & 0 & 0 & C \\
      0 & 0 & 0 & 0 \\
      0 & 0 & 0 & 0 \\
      C & 0 & 0 & 1-\alpha
   \end{array}
   \right],
\end{align}
where $\alpha = 1 + 4p\sin^2{\theta} - 2p - 2 \sin^2{\theta}$. It is trivial to get that the parameter $\alpha$ ranges from $-1$ to $1$ according to the values of $p$ and $\theta$. Note that the matrix $\rho(p,\theta)$ is equivalent to Verstraete and Wolf's example in Eq.~(\ref{eq:mtrixFV}) up to a local unitary operation $I\otimes X$. Therefore, up to local unitary operations, the example in Verstraete and Wolf's paper~\cite{FV02} is the complete set of two-qubit states for reaching the upper bound.

\section{Conclusion and discussion}
To conclude, we have systematically investigated the quantitative relationship between quantum nonlocality and entanglement of a general two-qubit system. We rederived a known upper bound of the nonlocality of a general two-qubit state, which depends on the state's entanglement. We proved the necessary and sufficient condition for reaching the upper bound, which is the main contribution of our work. The condition is a class of two-qubit states which turns out to be equivalent to the class of states provided in Verstraete and Wolf's paper~\cite{FV02}. The key for solving the problem was to investigate the different two-qubit pure states that have the same optimal CHSH operator. First, we found the condition that two different general two-qubit states can have the same optimal CHSH operator. Second, we concluded that at most two two-qubit pure states can have the same optimal CHSH operator and figured out the class of such two-qubit pairs.

The research outcomes have practical significance. More efficient quantum information processing protocols could be designed by considering the quantum resources of which the most nonlocality feature could be stimulated by the same nonlocality test setting. In one situation, we only know the amount of entanglement of a given two-qubit resource while no knowledge about the structure of the state is known. Thus, it is difficult to get the quantity of the nonlocality. By applying the relationship of the nonlocality and entanglement, one is able to know an upper bound for the nonlocality of the state without nonlocality test experiments and the class of states that can reach the upper bound.

\begin{acknowledgments}
    The authors are partially supported by Anhui Initiative in Quantum Information Technologies (Grant No. AHY150100).
\end{acknowledgments}


\end{document}